\newtheorem{theorem}{Theorem}[section]
\newcommand{\tH}{\tilde{H}}
\newcommand{\Z}{\mathbb{Z}}
\newcommand{\MD}{\mathbb{MD}}
\newcommand{\REXT}{{\mathbb{REXT}}}
\newcommand{\cM}{\mathcal{M}}
\newcommand{\cZ}{\mathcal{Z}}
\newcommand{\mK}{K}
\newcommand{\hcZ}{\widehat{\cZ}}
\newcommand{\tM}{\tilde{M}}
\newcommand{\hM}{\hat{M}}
\newcommand{\myboxdot}{\rlap{\hskip0.25ex\raise0.2ex\hbox{$\bullet$}}\square}
\newcommand{\emptybox}{\hbox{$\square$}}
\newcommand{\Wr}{\operatorname{Wr}}
\begin{document}
\title{Ladder operators and rational extensions}
\author{David G\'omez-Ullate}
\address{Escuela Superior de Ingenier\'ia, U.  C\'adiz, 11519 Puerto Real,
  Spain, \and
  Departamento de F\'isica Te\'orica, U. Complutense, 28040 Madrid, Spain}
\email{david.gomezullate@uca.es}

\author{ Yves Grandati}
\address{ L. Physique et Chimie Th\'eoriques, U. de
 Lorraine, 57078 Metz, Cedex 3, France}
\email{yves.grandati@univ-lorraine.fr}

\author{Zo\'e McIntyre}
\address{Department of Physics, McGill U., Montr\'eal QC
  Canada H3A 2T8}
\email{zoe.mcintyre@mail.mcgill.ca}

\author{Robert Milson}
\address{Dept. of Mathematics and Statistics, Dalhousie U.,
  Halifax NS, Canada B3H 3J5} \email{rmilson@dal.ca}


\begin{abstract}
  This note presents the classification of ladder operators
  corresponding to the class of rational extensions of the harmonic
  oscillator.  We show that it is natural to endow the class of
  rational extensions and the corresponding intertwining operators
  with the structure of a category $\REXT$.  The combinatorial data
  for this interpretation is realized as a functor $\MD \to \REXT$,
  where $\MD$ refers to the set of Maya diagrams appropriately endowed
  with categorical structure.  Our formalism allows us to easily
  reproduce and extend earlier results on ladder operators.
\end{abstract}

\maketitle


\section{Introduction}

Supersymmetric quantum mechanics (SUSYQM) has proven to be a key
technique in the construction of exactly-solvable potentials and in
the understanding of shape-invariance.  The supersymmetric partners of
the harmonic oscillators are known as rational extensions because the
corresponding potentials have the form of a harmonic oscillator plus a
rational term that vanishes at infinity.

There has been some recent interest in rational extensions possessing
ladder operators, which may be thought of as higher order analogues of
the classical creation and annihilation operators.  There are
applications of such ladder operators to superintegrable systems
\cite{MQ1,MQ2}, rational solutions of Painlev\'e equations \cite{MN},
and coherent states \cite{HHMZ}.

In this note we classify the ladder operators corresponding to the
class of rational extensions of the harmonic oscillator.  Rational
extensions are naturally associated with combinatorial objects called
Maya diagrams.  We show that any two rational extensions are related
by an intertwining relation.  It therefore makes sense to endow both
Maya diagrams and rational extensions with the structure of a
category, and to interpret the relation Maya diagram $\mapsto$
rational extension as a functor between these categories.  This
approach allows us to classify ladder operators and syzygies of ladder
operators, and thereby to generalize the results of \cite{MQ1,MQ2}.

\section{Maya diagrams}

A Maya diagram is a set of integers $M\subset \Z$ containing a finite
number of positive integers, and excluding a finite number of negative
integers.  We visualize a Maya diagram as a horizontally extended
sequence of $\myboxdot$ and $\emptybox$ symbols, with the filled symbol
$\myboxdot$ in position $m$ indicating membership $m\in M$. The
defining assumption now manifests as the condition that a Maya diagram
begins with an infinite filled $\myboxdot$ segment and terminates with
an infinite empty $\emptybox$ segment.

A Maya diagram may also be regarded as a strictly decreasing sequence
of integers $m_1 > m_2> \cdots$, subject to the constraint that
$m_{i+1} = m_i-1$ for $i$ sufficiently large.  It follows that there
exists a unique integer $\sigma$, called the index of $M$, such that
$m_i = -i+\sigma$ for $i$ sufficiently large.

Let $\cM$ denote the set of all Maya diagrams.  The flip at position
$k\in \Z$ is the involution $f_k:\cM\to \cM$ defined by
\begin{equation}\label{eq:flipdef}
 f_k : M \mapsto
\begin{cases}
   M \cup \{ k \}, & \text{if}\quad k\notin M, \\
   M \setminus \{ k \},\quad & \text{if}\quad k\in M.
\end{cases}\qquad M\in \cM.
\end{equation}
\noindent
In the first case, we say that the flip acts on $M$ by a
state-deleting transformation ($\emptybox\to$ $\myboxdot$), and in the
second case, by a state-adding transformation
($\myboxdot$$\to\emptybox$).

Let $\cZ_p$ denote the set of subsets of $\Z$ having cardinality $p$,
and $\cZ = \bigcup_p \cZ_p$ the set of all finite subsets of $\Z$.
For $K\in \cZ_p$ consisting of distinct $k_1,\ldots, k_p\in \Z$ we
define the multi-flip $f_K:\cM \to \cM$ by
\begin{equation}
  \label{eq:fKMdef}
 f_K(M) = (f_{k_1} \circ \cdots \circ f_{k_p})(M),\quad M\in \cM,
\end{equation}
Since flips commute, the action of $f_K$ does not depend upon the
order of $k_1,\ldots, k_p$.

It is useful to regard $\cM$ as a complete graph whose edges are
multi-flips.  For Maya diagrams $M_1,M_2\in \cM$, the symmetric
difference
\[ M_1\ominus M_2 = (M_1\setminus M_2) \cup (M_2 \setminus M_1) \]
is precisely the edge that connects $M_1$ and $M_2$.  More precisely, if
\[ K = M_1\ominus M_2 = M_2 \ominus M_1, \] then $f_K(M_1) = M_2$ and
$f_K(M_2) = M_1$.

Multi-flips can also be used to define a bijection $\cZ\to\cM$ given by
$ K\mapsto f_K(M_\emptyset)$, where $M_\emptyset:=\Z_{-}$
denotes the trivial Maya diagram.  We refer to $K\in \cZ$ as the index
set of the Maya diagram $f_K(M_\emptyset)$.

The additive group $\Z$ acts on $\cM$, because  for $M\in \cM$  and
$n\in \Z$, the set
\[ M+n = \{ m+n \colon m\in M \}\] is also a Maya
diagram. Moreover, we have
\begin{equation}
  \label{eq:indexshift}
  \sigma_{M+n}=\sigma_M+n.  
\end{equation}

We will refer to an equivalence class of Maya diagrams related by
translations as an \textit{unlabelled Maya diagram}, and denote the
set of all unlabelled Maya diagrams by $\cM/\Z$.  One can visualize
the passage from an unlabelled to a labelled Maya diagram as choosing
the placement of the origin.

For $B\in \cZ_p$, where $p=2g+1$ is odd, we define the Maya
diagram
\begin{equation}
  \label{eq:MBi}
  \Xi(B)= (-\infty,b_{0}) \cup [b_{1},b_{2}) \cup
  \ \cdots \cup [b_{2g-1},b_{2g}),
\end{equation}
where $b_0 <b_1<\cdots < b_{2g}$ is an increasing enumeration of $B$
and where $[m,n) = \{ j\in \Z \colon m\leq j < n\}$.  Every Maya
diagram has a unique representation of the form $\Xi(B)$ for some
$B\in \cZ_{2g+1}$.  We will call the corresponding $g\geq 0$ the genus
of $M= \Xi(B)$ and refer to $(b_0,\ldots, b_{2g})$ as the block
coordinates of $M$.  The block coordinates may also be characterized
as the unique set $B\in \cZ$ such that $f_B(M) = M+1$.

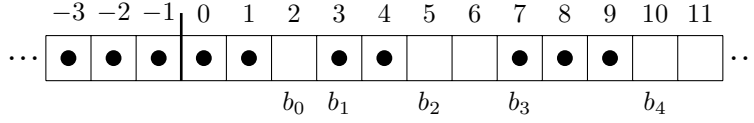
\begin{figure}[h]
\begin{tikzpicture}[scale=0.6]

\draw  (1,1) grid +(15 ,1);

\path [fill] (0.5,1.5) node {\huge ...}
++(1,0) circle (5pt) ++(1,0) circle (5pt)  ++(1,0) circle (5pt)
++(1,0) circle (5pt) ++(1,0) circle (5pt)
++(2,0) circle (5pt) ++(1,0) circle (5pt)
++ (3,0) circle (5pt)  ++(1,0) circle (5pt)   ++ (1,0) circle (5pt)
++ (3,0) node {\huge ...} +(1,0);

\draw[line width=1pt] (4,1) -- ++ (0,1.5);

\foreach \x in {-3,...,11} 	\draw (\x+4.5,2.5)  node {$\x$};
\path (6.5,0.5) node {$b_0$} ++ (1,0) node {$b_1$}
++ (2,0) node {$b_2$}++ (2,0) node {$b_3$}++ (3,0) node {$b_4$}
;
\end{tikzpicture}
\caption{Block coordinates $(b_0,\ldots, b_4) = (2,3,5,7,10)$
  of a genus $2$ Maya diagram
  $M = (-\infty,b_0)\cup [ b_1,b_2) \cup [
  b_3,b_4)$. Note that the genus is both the number of
  finite-size empty blocks and the number of finite-size filled
  blocks.}\label{fig:genusM}
\end{figure}

Figure~\ref{fig:genusM} explains the visual meaning of block
coordinates and of genus.  After removal of the initial infinite
$\myboxdot$ segment and the trailing infinite $\emptybox$ segment, a
Maya diagram consists of alternating empty $\emptybox$ and filled
$\myboxdot$ segments of variable length.  The genus $g$ counts the
number of such pairs.  The even block coordinates $b_{2i}$ indicate
the starting positions of the empty segments, and the odd block
coordinates $b_{2i+1}$ indicate the starting positions of the filled
segments.

\section{Rational extensions}

For  $n\in \Z$, set
\[ \psi_n(x) =
  \begin{cases}
    e^{-\frac{x^2}{2}} H_n(x) & \text{ if } n\geq 0\\
    e^{\frac{x^2}{2}} \tH_{-n-1}(x) & \text{ if } n<0
  \end{cases}
\]
where,
\[
  H_n(x) = (-1)^n e^{x^2} \frac{d^n}{dx^n} e^{-x^2} ,\quad n=0,1,2,\ldots
\]
are the Hermite polynomials, and
\[ \tH_n(x) = (-\mathrm{i} )^{n} H_n(\mathrm{i} x) \] are the conjugate Hermite
polynomials. We then have
\[ -\psi_n''(x) + x^2 \psi_n(x) = (2n+1) \psi_n(x),\quad n\in \Z.\]
For $n\geq 0$, the above solutions correspond to the bound states of
the quantum harmonic oscillator.  The solutions for $n<0$ do not
satisfy the boundary conditions at $\pm\infty$ and therefore represent
virtual states.

For $M\in \cM$ with  index set $K\in \cZ_p$, 
let $s_1>\cdots > s_r\geq 0$ and $t_1>\cdots > t_q\geq 0$
be the uniquely specified lists of  natural numbers such  that 
\[ K = \{ -1-s_1,\ldots, -1-s_r, t_q,\ldots, t_1 \},\quad
  p=q+r.\]
We will refer to $(s_1,\ldots, s_r \mid t_q,\ldots ,t_1)$ as the
\textit{Frobenius symbol} of $M$. It is easy to check that the index
of $M$ is given by $\sigma = q-r$.

Let us now define
\begin{equation} H_M(x) = e^{\sigma_M \frac{x^2}{2}}\Wr[\psi_{k_1},\ldots, \psi_{k_p} ],
\end{equation}
where $k_1<\cdots < k_p$ is an increasing enumeration of $K$, where
$\sigma_M\in \Z$ is the index, and $\Wr$ is the usual Wronskian
determinant.  The polynomial nature of $H_M(x)$ becomes evident in the
following pseudo-Wronskian \cite{GGM2} realization:
\begin{equation}\label{eq:pWdef2} H_M =
  \begin{vmatrix} \tH_{s_1} & \tH_{s_1+1} & \ldots &
    \tH_{s_1+r+q-1}\\ \vdots & \vdots & \ddots & \vdots\\ \tH_{s_r} &
    \tH_{s_r+1} & \ldots & \tH_{s_r+r+q-1}\\ H_{t_q} & H'_{t_q} &
    \ldots & H^{(r+q-1)}_{t_q}\\ \vdots & \vdots & \ddots & \vdots\\
    H_{t_1} & H'_{t_1} & \ldots & H^{(r+q-1)}_{t_1}
  \end{vmatrix}.
\end{equation}

A suitably normalized pseudo-Wronskian is a translation invariant of
the underlying Maya diagram.  The following result was proved in
\cite{GGM2}.  Set
\begin{equation}
  \label{eq:hHdef}
  \widehat{H}_M =
  \frac{(-1)^{rq}H_M}{
    \prod_{i<j}  2(s_j-s_i)\prod_{i<j} 2( t_i-t_j)}.
\end{equation}
Then for $M\in \cM$ and $n\in\Z$ we have
\begin{equation} \label{eq:HMequiv}
  \widehat{H}_M =  \widehat{H}_{M+n}.
\end{equation}

The potential
\begin{equation}
  \label{eq:UMdef}
\begin{aligned}
  U_M(x)
  &= x^2 - 2 \frac{d^2}{dx^2}\log \Wr[ \psi_{k_1},\ldots, \psi_{k_p}
    ],\\ 
  &= x^2 + 2\left( \frac{H_M'}{H_M}\right)^2 - \frac{2H_M''}{H_M} - 2 \sigma_M
\end{aligned}
\end{equation}
is known as a rational extension \cite{GGM} of the harmonic
oscillator.  The corresponding Hamiltonian operators
\begin{equation}
  \label{eq:TMdef}
   T_M = -\frac{d^2}{dx^2} + U_M 
\end{equation}
 are exactly solvable with
\[ T_M[\psi_{M,k}] = (2k+1) \psi_{M,k},\]
where
\[
  \psi_{M,k}
  =  e^{ \frac{\epsilon x^2}{2}} \frac{H_{f_k(M)}}{H_M},\qquad
  \epsilon =
  \begin{cases}
    +1 & \text{ if } k\in M\\
    -1 & \text{ if } k\notin M
  \end{cases}.
\]
Note that, as a consequence of \eqref{eq:indexshift} and
\eqref{eq:HMequiv}, $T_M$ is translation covariant:
\begin{equation}
  \label{eq:TM+n}
  T_{M+n} = T_M  + 2n,\quad n\in \Z.
\end{equation}

Let $(b_0,b_1,\ldots, b_{2g})$ be the block coordinates of $M$.  By
the Krein-Adler theorem \cite{adler,GGM,krein}, the polynomial $H_M$
has no real zeros if and only if $b_{2j}-b_{2j-1}$ is even for all
$j=1,\ldots, g$, i.e., if all the finite $\myboxdot$ segments of $M$
have even size.  For such Maya diagrams, the potential $U_M$ is
non-singular and hence $T_M$ corresponds to a self-adjoint operator.
The bound states of the operator correspond to the empty boxes of $M$,
i.e., to $k\notin M$.  It is precisely for such $M\in \cM$ and
$k\notin M$ that the eigenfunction $\psi_{M,k}$ is
square-integrable. For such $M$ and $k$, the polynomial part of
$\psi_{M,k}$ is known as an exceptional Hermite polynomial \cite{GGM}.

\section{Categorical Structure}
In this section, we define $\MD$, a category whose objects are Maya
diagrams and whose arrows are multi-flips, and $\REXT$, another
category whose objects are rational extensions and whose arrows are
intertwining operators (definition given below).  We then
exhibit a functor $\MD \to \REXT$ that we use  to classify ladder
operators.

In order to define composition of arrows, it will first be necessary
to generalize the notion of a multi-flip.  A multi-set is a
generalized set object that allows for multiple instances of each of
its elements.
Let $\hcZ_p$ denote the set of integer multi-sets of cardinality $p$
and $\hcZ = \bigcup_p \hcZ_p$ the set of finite integer multi-sets.
We express a multi-set $K\in \hcZ$ as
\begin{equation}
  \label{eq:Kki}
K = \{ k_1^{p_1},\ldots, k_q^{p_q}\} \,
\end{equation}
where $k_1,\ldots, k_q\in \Z$ are distinct, and where $p_i>0$ indicate
the multiplicity of element $k_i$.  The cardinality is then given by
$p=p_1+\cdots +p_q$.  The notion of a multi-flip extends naturally
from sets to multi-sets.  Indeed, for $\mK\in \hcZ$, we re-use
\eqref{eq:fKMdef} to define the multi-flip $f_{\mK}:\cM\to \cM$.

We say that $K$ is an even multi-set if all of its elements have an
even multiplicity.  Since flips are involutions, $f_K$ is the identity
transformation if and only if $K$ is even.  If $K$ is an even
multi-set then it has the unique decomposition $K=K_1\cup K_1$ where
$K_1$ has the same elements as $K$ but with the multiplicities divided
by $2$.  More generally, every multi-set $K\in \hcZ$ has a unique
decomposition of the form
\begin{equation}
  \label{eq:KK0K1}
  K = K_0 \cup K_1\cup K_1, \quad K_0\in \cZ,\; K_1 \in \hcZ,
\end{equation}
where $K_0$ is the set of integers that occur in $K$ with an odd
multiplicity.  Again, since flips are involutions, we have
$f_K = f_{K_0}$.

The objects of $\MD$ are labelled Maya diagrams $\cM$, and the arrows
are pairs $(M,K)\in \cM\times \hcZ$.
 The source of $(M,K)$ is $M$, and the target is $f_K(M)$.
Composition of morphisms is given by the union of multi-sets:
\[ (M_2,K_2) \circ (M_1,K_1) = (M_1, K_1 \cup K_2), \] where
$M_1\in \cM,\; K_1,K_2\in \hcZ,\; M_2 = f_{K_1}(M_1)$.




For differential operators $A,T_1,T_2$, we say that $A$ intertwines
$T_1,T_2$ if
\[ A T_1 = T_2 A. \] The objects of $\REXT$ are the rational
extensions $T_M,\; M\in \cM$, and the arrows are monic differential
operators that intertwine two rational extensions.  Observe that if
$A$ intertwines $T_1, T_2$ then so does $A\circ p(T_1)$, where $p(x)$
is an arbitrary polynomial.  Given $T_1, T_2$, we say that $A$ is a
primitive intertwiner if it does not include a nontrivial right factor
$p(T_1)$.

For a Maya diagram
$M\in \cM$ and a set $K\in \cZ_p$, we define the operator
\[ A_{M,K}[y] = \frac{\Wr[\psi_{M,k_1},\ldots,
    \psi_{M,k_p},y]}{\Wr[\psi_{M,k_1},\ldots, \psi_{M,k_p}]} .\] By
construction, $A_{M,K}$ is a monic differential operator of order
$p$. These intertwining operators have their origin in SUSYQM
(supersymmetric quantum mechanics), and
obey the intertwining relation
\[ A_{M_1,K} T_{M_1} = T_{M_2} A_{M_1,K},\quad M_2 = f_K(M_1),\quad
  M_1,M_2\in \cM,\; K\in \cZ.\]

It is possible to show that $A_{M,K}$ is a primitive intertwiner
between $T_M$ and $T_{f_K(M)}$.  Moreover, it is possible to show
\cite[Proof of Theorem 3.10]{GFGUM} that that every arrow in $\REXT$
has the form $A_{M,K}\circ p(T_M)$, where $A_{M,K}$ is primitive
(i.e., $K$ is a set), and $p(x)$ is a polynomial. We also note that
these intertwiners are translation invariant:
\begin{equation}
  \label{eq:AMK+n}
   A_{M+n,K+n} = A_{M,K},\quad n\in \Z.
\end{equation}

In order to describe the composition of intertwiners, we need to
extend the above definition to include multi-sets.  For $K\in \hcZ$,
let $K_0\in \cZ$ and $K_1\in \hcZ$ be as per \eqref{eq:KK0K1}.  For
$M\in \cM$, we now define
\begin{equation}
  \label{eq:AMKK0}
  A_{M,K} = A_{M,K_0}\circ \prod_{k\in K_1} (2k+1-T_M).
\end{equation}
In other words, if $K\in \hcZ$ contains elements of higher
multiplicity, then $A_{M,K}$ is no longer primitive.  The arrows of
$\REXT$ are the operators $A_{M,K},\; M\in \cM,\; K\in \hcZ$.
Composition of arrows is just the usual composition of differential
operators.
\begin{theorem}
  \label{thm:functor}
  The correspondence $M\mapsto T_M,\, M\in \cM$ and
  $(M,K)\mapsto A_{M,K},\, K\in \hcZ$ is a covariant
  functor $\MD\to \REXT$.
\end{theorem}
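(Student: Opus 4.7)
The plan is to verify the three axioms of a (covariant) functor: compatibility of source and target on arrows, preservation of identity, and preservation of composition. Compatibility of source and target is the intertwining relation $A_{M,K}T_M = T_{f_K(M)}A_{M,K}$. For set-valued $K\in\cZ$ this is the standard SUSYQM statement recalled in the excerpt; for $K\in\hcZ$, decomposing $K=K_0\cup K_1\cup K_1$ as in \eqref{eq:KK0K1} gives $f_K(M)=f_{K_0}(M)$ (since flips are involutions), and the extra factor $\prod_{k\in K_1}(2k+1-T_M)$ in \eqref{eq:AMKK0} is a polynomial in $T_M$, hence commutes with $T_M$, so the intertwining extends. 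The identity arrow $(M,\emptyset)$ maps to $A_{M,\emptyset}[y]=\Wr[y]/\Wr[\,]=y$, which is the identity operator.

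The substantive content is preservation of composition, i.e.,
\[ A_{f_{K_1}(M),K_2}\circ A_{M,K_1} = A_{M,K_1\cup K_2}, \qquad M\in\cM,\; K_1,K_2\in\hcZ. \]
By writing $K_2$ as an iterated multi-set union of singletons and inducting on $|K_2|$, this reduces to the atomic identity
\[ A_{f_K(M),\{k\}}\circ A_{M,K} = A_{M,K\cup\{k\}},\qquad M\in\cM,\; K\in\hcZ,\; k\in\Z. \]
I would then split into two cases on the parity of the multiplicity of $k$ in $K$. In the even case ($k\notin K_0$), $(K\cup\{k\})_0 = K_0\cup\{k\}$ and $(K\cup\{k\})_1=K_1$, and the claim reduces to the set-level Darboux-Crum composition law, with the polynomial factor unchanged. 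In the odd case ($k\in K_0$), $(K\cup\{k\})_0=K_0\setminus\{k\}$ and $(K\cup\{k\})_1=K_1\cup\{k\}$; factoring $A_{M,K_0}=A_{f_{K_0\setminus\{k\}}(M),\{k\}}\circ A_{M,K_0\setminus\{k\}}$ via set-level Darboux-Crum, the composition isolates the back-to-back intertwiner $A_{f_k(M'),\{k\}}\circ A_{M',\{k\}}$ at $M'=f_{K_0\setminus\{k\}}(M)$, which by SUSYQM factorization equals $2k+1-T_{M'}$, after which intertwining converts this through $A_{M,K_0\setminus\{k\}}$ into the $2k+1-T_M$ factor demanded by \eqref{eq:AMKK0}.

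The main obstacle is the two classical atomic inputs: (a) the set-level Darboux-Crum composition $A_{f_K(M),\{k\}}\circ A_{M,K}=A_{M,K\cup\{k\}}$ for disjoint $k\notin K\in\cZ$, which follows from a Jacobi-type identity for bordered Wronskians applied to the definition of $A_{M,K}$; and (b) the cancellation $A_{f_k(M),\{k\}}\circ A_{M,\{k\}}=2k+1-T_M$, which expresses that the double-Darboux through the seed $\psi_{M,k}$ factors the shifted Hamiltonian $T_M-(2k+1)$. Both are classical and essentially the content of Theorem~3.10 of \cite{GFGUM} cited in the text; the real work in the present proof is the multi-set bookkeeping outlined above.
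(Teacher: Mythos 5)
Your proposal is correct and rests on exactly the same pivot as the paper's proof, which consists of the single observation that $A_{M_2,K_2}\circ A_{M_1,K_1}=A_{M_1,K_1\cup K_2}$ for $M_2=f_{K_1}(M_1)$; you go further by actually establishing this identity (induction on singletons, the parity split via \eqref{eq:KK0K1}, and the atomic SUSYQM facts $A_{f_K(M),\{k\}}\circ A_{M,K}=A_{M,K\cup\{k\}}$ and $A_{f_k(M),\{k\}}\circ A_{M,\{k\}}=2k+1-T_M$), together with the source/target and identity checks that the paper leaves implicit. No gaps; your write-up is a fleshed-out version of the paper's one-line argument.
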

\begin{proof}
  It suffices to observe that for $M_1\in \cM,\; K_1,K_2\in \hcZ$ we
  have
  \[ A_{M_2,K_2}\circ A_{M_1,K_1} = A_{M_1,K_1\cup K_2},\quad M_2 =
    f_{K_1}(M_1).\]
\end{proof}

\section{Ladder operators}
We define a ladder operator to be an intertwiner $A$ such that
\[ A T_M = (T_M+\lambda) A \] for some $M\in \cM$ and constant
$\lambda$.  Since $T_{M+n} = T_M+2n$, Theorem \ref{thm:functor}
implies that for every rational extension $T_M,\;M\in \cM$, and
$n\in \Z$, there exists a ladder operator $A_{M,K}$, where
$K = (M +n)\ominus M$.  By \eqref{eq:AMK+n} no generality is lost if
we index such ladder operators in terms of unlabelled Maya diagrams
$[M]\in \cM/Z$.

A recent result provides a characterization of translational
multi-flips \cite{GGM4} in terms of cyclic Maya diagrams. This
characterization makes it possible to establish the order of a ladder
operator \cite{GGM3}.
\begin{theorem}
  \label{thm:M+nM}
  Let  $M\in \cM$ and $n =1,2,\ldots$.
  Then,   
  \begin{equation}
    \label{eq:pngi}
    |(M+n) \ominus M| = n + 2\sum_{i=0}^{n-1} g_i ,
  \end{equation}
  where $g_i$ is the genus of the Maya diagram
  \[ M_i = \{ m\in \Z \colon mn + i \in M \},\quad i=0,1,\ldots, n-1
    ,\]
\end{theorem}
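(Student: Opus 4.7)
The plan is to reduce the general statement to the trivial case $n=1$ via decomposition by residue classes modulo $n$.

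First, I would verify that each $M_i$ defined in the theorem statement is a bona fide Maya diagram. Since $M$ contains all sufficiently negative integers and excludes all sufficiently positive ones, the same finiteness conditions transfer to the portion of $M$ lying in the arithmetic progression $\{ mn + i : m \in \Z \}$; after the relabelling $mn + i \mapsto m$, the resulting set $M_i \subset \Z$ still has cofinite negative tail and finite positive part, so it qualifies as a Maya diagram. Next I would record two elementary observations: translation by $n$ intertwines with the decomposition, in the sense that $(M+n)_i = M_i + 1$ for each $i$; and the symmetric difference $(M+n) \ominus M$ splits disjointly along residue classes modulo $n$, with the relabelling $mn + i \leftrightarrow m$ identifying the part in residue class $i$ with $(M_i + 1) \ominus M_i$. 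Combining these gives
\[ |(M+n) \ominus M| = \sum_{i=0}^{n-1} |(M_i + 1) \ominus M_i|. \]

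It then remains to establish the $n = 1$ case, namely that $|(M' + 1) \ominus M'| = 2g' + 1$ for any Maya diagram $M'$ of genus $g'$. This is immediate from the block-coordinate characterization already recorded in the paper: the block coordinates $B = (b_0, \ldots, b_{2g'})$ of $M'$ satisfy $f_B(M') = M' + 1$, hence $B = M' \ominus (M' + 1)$, and by definition $|B| = 2g' + 1$. Applying this identity to each $M_i$ and summing over $i = 0, 1, \ldots, n-1$ yields
\[ |(M+n) \ominus M| = \sum_{i=0}^{n-1} (2g_i + 1) = n + 2\sum_{i=0}^{n-1} g_i, \]
as required.

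The main obstacle is not conceptual but bookkeeping: one has to be careful that the residue-class decomposition truly preserves the Maya-diagram property in each component, and that the commutation $(M+n)_i = M_i + 1$ together with the disjoint decomposition of the symmetric difference along residue classes is set up without off-by-one errors. Once those verifications are in hand, the formula follows immediately from the block-coordinate identity.
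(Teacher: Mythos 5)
Your proof is correct and follows essentially the same route as the paper: both arguments rest on the residue-class decomposition $M \mapsto (M_0,\ldots,M_{n-1})$ together with the block-coordinate fact that $(M'+1)\ominus M'$ is the set $B$ of block coordinates, of cardinality $2g'+1$. The paper merely runs the argument in the synthetic direction (assembling $B=\bigcup_i (nB_i+i)$ and invoking uniqueness of the set with $f_B(M)=M+n$) where you decompose the symmetric difference; these are the same computation.
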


\begin{proof}
  Let $B_i\in \cZ_{2g_i+1}$ be the block coordinates of $M_i$, and set
  \[ B = \bigcup_{i=0}^{n-1} (nB_i+i) = \bigcup_{i=0}^{n-1} \{ nb + i
    \colon b \in B_i\}. \] Since $B_i$ is the unique set such that
  $f_{B_i}(M_i) = M_i+1$, it follows that $B$ is the unique set such
  that $f_B(M) = M+n$.  Therefore $B=(M+n)\ominus M$.
\end{proof}

Fix a Maya diagram $M\in \cM$. An immediate consequence of Theorem
\ref{thm:M+nM} is the existence of a primitive ladder operator that
intertwines $T_M$ and $T_M+2n$ for every $n\in \Z$.  The ladder
operator in question is $L_{n}:=A_{M,K}$, where $K=(M+n)\ominus M$.
The order of $L_{n}$ is given by \eqref{eq:pngi}. If $n>0$, then both
$L_{n}$ and $L_{1}^n$ intertwine $T_M$ and $T_M+2n$; it follows that
there must be a syzygy of the form
\[ L_{1}^n = L_{n} \circ p(T_M),\] where the roots of the polynomial $p$
are determined by \eqref{eq:AMKK0}.

The action of ladder operators on states is that of a lowering
or raising operator according to
\[ L_n[\psi_{M,k}] = C_{M,n,k} \psi_{M,k-n},\quad k\notin M, \] where
$C_{M,n,k}$ is zero if $\psi_{M,k-n}$ is not a bound state, i.e., if
$k-n\in M$. Otherwise, $C_{M,n,k}$ is a rational number whose explicit
form can be derived on the basis of \eqref{eq:hHdef}. As a particular
example, suppose that the index set of $M$ consists of positive
integers $0<k_1<\cdots < k_p$, that $n>0$, and that $k\notin M$. In
this case,
\[ C_{M,n,k} =
  \begin{cases}
    \prod_{i\in M\setminus(M+n)} (2i-2j) \times(k-n+1)_n\, 2^n &
    \text{ if } k-n \notin M\\
        0 &\text{ otherwise.}  
  \end{cases}
\]

\section{Examples} The articles \cite{MQ1,MQ2} considered a particular
class of ladder operators corresponding to Maya diagrams obtained by a
single state-adding transformation.  Fix some $n=1,2,\ldots$, and let
$\tM_n$ be the Maya diagram with index set $\{ -n\}$, i.e., let
$\tM_n = \Z_{-} \setminus \{ -n \}$. We set
\[ \hM_n = \tM_n+n = \Z_{-} \cup \{ 1,\ldots, n-1 \},\] and observe
that $\hM_n$ has index set $\{ 1,\ldots, n-1\}$. Hence, 
\[ L_n := A_{\tM_n,\{ -n,1,\ldots, n-1 \}},\] is an $n$th order
ladder operator that intertwines $T_{\tM_n}$ and $T_{\hM_n}$.  Ordering
the flips in ascending order, we obtain the following factorization
into first-order intertwiners:
\[ L_n = A_{\hM_{n-1},\{n-1\}} \cdots
  A_{\hM_2,\{2\}} A_{\hM_1,\{1\}} A_{\tM_n,\{-n\}} ;\] each flip
corresponds to a state-deleting transformation.

Let us also observe that $\tM_n$ is a genus 1 Maya diagram.  It follows
that
\[ L_{1} := A_{\tM_n,\{-n,-n+1,0\}}.\]
is a third-order ladder operator that intertwines
$\tM_n$ and $\tM_n+1$.  

The composition $L_{1}^n$ is represented by the multi-set
\[ \bigcup_{j=0}^{n-1}\{ -n+j,-n+j+1,j\} = \{ -n,1,\ldots, n-1\} \cup
  \{ (-n+1)^2,\ldots, (-1)^2, (0)^2 \},\] where the superscripts
indicate repetition (and not a square). The syzygy between $L_{n}$
and $L_{1}$ is therefore
\[ L_{1}^n = L_{n} \prod_{j=-n+1}^0 (2j+1-T_{\tM_n}) .\]

\end{document}